\documentclass[a4paper,UKenglish,cleveref, autoref]{oasics-v2019}

\usepackage{lmodern}
\usepackage{pythonhighlight}
\usepackage{hyperref}

\newcommand{\polylog}{\textup{polylog}}

\bibliographystyle{plainurl}%

\title{Fast and Simple Modular Subset Sum}

\author{Kyriakos Axiotis}{Massachusetts Institute of Technology}{}{}{}%
\author{Arturs Backurs}{Toyota Technological Institute at Chicago}{}{}{}%
\author{Karl Bringmann}{Saarland University and Max-Planck Institute for Informatics, Saarland Informatics Campus}{}{}{}
\author{Ce Jin}{Massachusetts Institute of Technology}{}{}{}%
\author{Vasileios Nakos}{Saarland University and Max-Planck Institute for Informatics, Saarland Informatics Campus}{}{}{}
\author{Christos Tzamos}{University of Wisconsin-Madison}{}{}{}%
\author{Hongxun Wu}{Institute for Interdisciplinary Information Sciences, Tsinghua University, China}{}{}{}%

\authorrunning{K.\,Axiotis, A.\,Backurs, K.\ Bringmann, C.\,Jin, V.\ Nakos, C.\,Tzamos and H.\, Wu}%

\ccsdesc[100]{Theory of computation~Algorithm design techniques}

\keywords{Modular Subset Sum, rolling hash, dynamic strings}%

\category{}%

\relatedversion{}%

\supplement{}%

\funding{\emph{Karl Bringmann and Vasileios Nakos:} This work is part of the project TIPEA that has received funding from the European Research Council (ERC) under the European Unions Horizon 2020 research and innovation programme (grant agreement No.~850979). \emph{Arturs Backurs:} Supported by an NSF Grant CCF-2006806.}%

\nolinenumbers %

\hideOASIcs  %

\EventEditors{John Q. Open and Joan R. Access}
\EventNoEds{2}
\EventLongTitle{42nd Conference on Very Important Topics (CVIT 2016)}
\EventShortTitle{CVIT 2016}
\EventAcronym{CVIT}
\EventYear{2016}
\EventDate{December 24--27, 2016}
\EventLocation{Little Whinging, United Kingdom}
\EventLogo{}
\SeriesVolume{42}
\ArticleNo{23}

\usepackage{listings}
\usepackage{algorithm}
\usepackage[noend]{algpseudocode}

\renewcommand{\epsilon}{\varepsilon}

\renewcommand\bar\overline
\global\long\def\Oh{{\cal O}}
\global\long\def\tOh{\widetilde{\Oh}}

\begin{document}

\maketitle

\begin{abstract}
We revisit the Subset Sum problem over the finite cyclic group $\mathbb{Z}_m$ for some given integer $m$. A series of recent works has provided near-optimal algorithms for this problem under the Strong Exponential Time Hypothesis. Koiliaris and Xu (SODA'17, TALG'19) gave a deterministic algorithm running in time $\tOh(m^{5/4})$, which was later improved to $\Oh(m \log^7 m)$ randomized time by Axiotis et al.~(SODA'19).

In this work, we present two simple algorithms for the Modular Subset Sum problem running in near-linear time in $m$, both efficiently implementing Bellman's iteration over $\mathbb{Z}_m$. The first one is a \emph{randomized} algorithm running in time $\Oh(m\log^2 m)$, that is based solely on rolling hash and an elementary data-structure for prefix sums; to illustrate its simplicity we provide a short and efficient implementation of the algorithm in Python. Our second solution is a \emph{deterministic} algorithm running in time $\Oh(m \,\polylog\, m)$, that uses dynamic data structures for string manipulation. 

We further show that the techniques developed in this work can also lead to simple algorithms for the All Pairs Non-Decreasing Paths Problem (APNP) on undirected graphs, matching the near-optimal running time of $\tOh(n^2)$ provided in the recent work of Duan et al.~(ICALP'19).
\end{abstract}

\section{Introduction}
In the \emph{Subset Sum} problem, one is given a multiset $X=\{x_1,x_2,\ldots,x_n\}$ of integers along with an integer target $t$, and is asked to decide if there exists a subset of $X$ that sums to the target $t$. In the \emph{Modular Subset Sum} generalization of the problem, all sums are taken over the finite cyclic group $\mathbb{Z}_m$ for some given integer $m$. 

Subset Sum is a fundamental problem in Computer Science known to be NP-complete but only weakly as it admits pseudo-polynomial time algorithms. In particular, the Dynamic Programming algorithm of Bellman \cite{Bellman57} solves the problem in $\Oh(nt)$ time. It works by iteratively computing all attainable subset sums when using only the first $i$ integers. More specifically, it starts with $S^0 = \{0\}$ and computes $S^{i}$ as $S^{i-1} \cup (S^{i-1} + x_{i})$, where $S^{i-1}+x_i = \{s+x_i\ |\ s \in S^{i-1}\}$.

The above algorithm can be straightforwardly applied to give an\footnote{All our running time bounds assume that the usual arithmetic operations on $\log(m)$-bit numbers can be performed in constant time.} $\Oh(nm)$ time algorithm for the modular case. Recent work by Koiliaris and Xu~\cite{KX2019} obtained an improved deterministic algorithm running in\footnote{After an $\Oh(n+m)$-time preprocessing we can assume that $n = \Oh(m)$, see Section~\ref{sec:prelims}. After this preprocessing, we can express the running time in terms of $m$ only. We ignore the preprocessing time in most running time bounds stated in this paper; this only hides an additive $\Oh(n)$.} $\tOh(m^{5/4})$ that relies on structural results from number theory~\cite{hamidoune2008complete}. 
A follow up work by Axiotis et al.~\cite{AxiotisBT18} presented a randomized algorithm that improves the running time to $\Oh(m \log^7m)$ using ideas based on linear sketching. The obtained running time matches (up to subpolynomial factors) the conditional lower bound of Abboud et al.~\cite{ABHS17} based on the Strong Exponential Time Hypothesis which implies that no $\Oh(m^{1-\epsilon})$ algorithms exist for any constant $\epsilon > 0$. 

While prior work obtained near-optimal algorithms for Modular Subset Sum, the resulting algorithms are complex and their analysis is relatively involved. In this work, we present two simple near-optimal algorithms. 
Our simplest algorithm (see Section~\ref{sec:algoone}) is \emph{randomized} and runs in time $\Oh(m \log^2 m)$. More precisely, the algorithm produces the whole set $X^\ast$ of attainable subset sums of the multiset $X$ in time $\Oh(|X^\ast| \log^2 m)$. 
The idea behind our algorithm is a fast implementation of Bellman's iteration and requires only two elementary techniques, rolling hashing and a data structure for maintaining prefix sums. These techniques are already taught in undergraduate level algorithms classes. We believe that our simple algorithm can serve as an example application when these techniques are introduced.

Our second algorithm (see Section~\ref{sec:algotwo}) is \emph{deterministic} and solves Modular Subset Sum in time $\tOh(m) = \Oh(m \,\polylog\, m)$. More precisely, the algorithm produces the set~$X^\ast$ of attainable subset sums in time\footnote{After an $\Oh(n \log n)$-time preprocessing we can assume that $n = \Oh(|X^*|)$, see Section~\ref{sec:prelims}. We ignore this preprocessing in our output-sensitive running time bounds; this only hides an additive $\Oh(n \log n)$.} $\tOh(|X^\ast|) = \Oh(|X^\ast| \,\polylog\, |X^\ast|)$. 
This algorithm is based on a classic data structure for string manipulation, and apart from this data structure the algorithm is simple. 
The idea of solving Modular Subset Sum via dynamic string data structures has already been suggested in~\cite{AxiotisBT18}, however, the algorithm proposed in~\cite{AxiotisBT18} runs in time $\Oh(|X^\ast| \,\polylog\, m)$, which we improve to $\Oh(|X^\ast| \,\polylog\, |X^\ast|)$.

\paragraph*{Techniques for the First Algorithm}
We first explain the technical innovation behind our randomized $\Oh(m \log^2 m)$ algorithm (Theorem~\ref{thm:result_1} in Section~\ref{sec:algoone}). At the core of our argument is a new method for computing the symmetric difference $S_1 \triangle S_2$ between two sets $S_1,S_2 \subseteq [m]$ in output-sensitive time upon specific updates on those two sets. The idea is to use hashing to compare the indicator vectors of the two sets. If the two hashes are the same, then the two sets are the same w.h.p. If not, we compute the symmetric difference of the sets $S_1$ and $S_2$ by recursing on the first and the second half of the universe, $\{1,\dots,\lceil m/2 \rceil\}$ and $\{\lceil m/2 \rceil + 1,\dots, m\}$. In total, at most $\log m + 1$ hashes need to be computed per element of $S_1 \triangle S_2$.
Each hash that needs to be computed corresponds to a contiguous interval of the indicator vectors. It can be evaluated in $\Oh(\log m)$ time given access to a data structure that maintains prefix sums of a polynomial rolling hash function for the indicator vectors of each of the sets.

We show that this idea can be applied to other problems beyond Modular Subset Sum. In particular, we consider the problem of all-pairs non-decreasing paths (APNP) in undirected graphs, where we obtain near-optimal running time $\Oh(n^2 \log n)$ improving the state of the art for this problem, see Appendix~\ref{app:apnp}.

These two algorithms for Modular Subset Sum and APNP are simple to describe and to analyze. To illustrate their simplicity, we provide short but detailed implementations in Python for both algorithms in the appendix (see Appendix~\ref{app:implementation} and~\ref{app:apnpimplementation}).

\paragraph*{Techniques for the Second Algorithm}
Now let us describe our deterministic $\Oh(m \,\polylog\, m)$ algorithm (Theorem~\ref{thm:result_2} in Section~\ref{sec:algotwo}). 
The core of this algorithm is again a fast method for computing the symmetric difference $S_1 \triangle S_2$ for sets $S_1,S_2 \subseteq [m]$. Consider the indicator vectors of $S_1$ and $S_2$ and interpret them as length-$m$ strings $z_1,z_2$ over alphabet $\{0,1\}$. Then the symmetric difference $S_1 \triangle S_2$ corresponds to all positions at which the strings $z_1,z_2$ differ. We thus obtain the first element of the symmetric difference by computing the longest common prefix of $z_1$ and $z_2$. Generalizing this idea, we can enumerate the symmetric difference using one longest common prefix query per output element. We implement such queries by using a classic data structure for dynamically maintaining a family of strings under concatenations, splits, and equality tests due to Mehlhorn et al.~\cite{MehlhornSU97}.

Implementing this idea naively leads to a running time of $\Oh(|X^\ast| \,\polylog\, m)$. By working on the run-length encoding of the strings $z_1,z_2$, we further improve the running time to $\Oh(|X^\ast| \,\polylog\, |X^\ast|)$.

\paragraph*{Further Related Work}

In addition to Modular Subset Sum, there has recently been a lot of interest in obtaining faster algorithms for other related problems, like non-modular Subset Sum~\cite{Bringmann17, KX2019, DBLP:conf/soda/JinW19} and Knapsack~\cite{rhee2015faster,DBLP:conf/soda/Chan18a, DBLP:conf/icalp/Jin19, bateni2018fast, DBLP:conf/icalp/AxiotisT19, jansen2018integer, eisenbrand2018proximity}, and providing conditional lower bounds~\cite{ABHS17, cygan2019problems, kunnemann2017fine}.

\section{Preliminaries}
\label{sec:prelims}

Let $X$ be a multiset of integers in $\mathbb{Z}_m$. Recall that we denote by $X^*$ the set of all attainable subset sums of $X$ modulo $m$.
In this section we present a preprocessing that ensures $n = \Oh(|X^*|)$, and thus also $n = \Oh(m)$, see Lemma~\ref{lem:preprocessing}. This is inspired by a similar preprocessing by Koiliaris and Xu~\cite[Lemma 2.4]{KX2019}.

For any $x \in \mathbb{Z}_m$ we write $\mu_X(x)$ for the multiplicity of $x$ in $X$, that is, how often $x$ appears in the multiset $X$. Note that the cardinality $|X|$ is equal to the total multiplicity $\sum_{x=0}^{m-1} \mu_X(x)$.

\begin{lemma}[Cf. Lemma 2.3 in \cite{KX2019}] \label{lem:preprocessing_onestep}
  Let $x \in X$ with $\mu_X(x) \ge 3$. Consider the multiset~$Y$ resulting from removing two copies of $x$ from $X$ and adding the number $2x \bmod m$ to it. Then $X^* = Y^*$ and $|Y| = |X| - 1$.
\end{lemma}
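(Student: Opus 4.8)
The plan is to prove the two claims $X^* = Y^*$ and $|Y| = |X| - 1$ separately, with the cardinality statement being essentially immediate and the set equality requiring a short argument in both directions.

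\medskip

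\textbf{Cardinality.} First I would observe that $Y$ is obtained from $X$ by removing two elements (two copies of $x$, which exist since $\mu_X(x) \ge 3 \ge 2$) and inserting one element ($2x \bmod m$). Hence $|Y| = |X| - 2 + 1 = |X| - 1$. This uses nothing about the modular structure; it is just bookkeeping on multiset cardinalities.

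\medskip

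\textbf{Set equality $X^* \supseteq Y^*$.} I would show that every element realizable as a subset sum of $Y$ is also realizable as a subset sum of $X$. Take any sub-multiset $T \subseteq Y$. If $T$ does not use the new element $2x \bmod m$, then $T$ is also a sub-multiset of $X$ (since the rest of $Y$ is a sub-multiset of $X$), so its sum lies in $X^*$. If $T$ does use the copy of $2x \bmod m$, replace that single element by two copies of $x$; since $Y$ was obtained by \emph{removing} two copies of $x$ from $X$, there are enough copies of $x$ available in $X$ to do this substitution, and the resulting sub-multiset $T'$ of $X$ has $\sum T' \equiv \sum T \pmod m$ because $x + x \equiv 2x \pmod m$. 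Hence $\sum T \in X^*$.

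\medskip

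\textbf{Set equality $X^* \subseteq Y^*$.} Conversely, take any sub-multiset $S \subseteq X$. Let $k$ be the number of copies of $x$ used in $S$; then $0 \le k \le \mu_X(x)$. If $k \le \mu_X(x) - 2 = \mu_Y(x)$, then $S$ is already a sub-multiset of $Y$, so $\sum S \in Y^*$. Otherwise $k \in \{\mu_X(x)-1,\ \mu_X(x)\}$, i.e., $S$ uses ``too many'' copies of $x$ to fit directly in $Y$; here I would replace two of the used copies of $x$ by one copy of $2x \bmod m$. After this swap the multiset uses $k-2 \le \mu_X(x) - 2 = \mu_Y(x)$ copies of $x$ (so it fits in $Y$'s stock of $x$'s) and one copy of $2x \bmod m$ (which $Y$ has), hence it is a valid sub-multiset of $Y$; its sum is congruent to $\sum S$ modulo $m$ by the same $x + x \equiv 2x$ identity. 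This needs $k \ge 2$, which holds because $k \ge \mu_X(x) - 1 \ge 2$. Thus $\sum S \in Y^*$. Combining the two inclusions gives $X^* = Y^*$.

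\medskip

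\textbf{Main obstacle.} There is no deep difficulty here; the only thing to be careful about is the multiplicity bookkeeping --- verifying in each case that the sub-multiset produced after the swap actually respects the multiplicities available in the target multiset ($Y$ has exactly $\mu_X(x) - 2$ copies of $x$ and one copy of $2x \bmod m$, possibly merged with a pre-existing copy of $2x \bmod m$ in $X$, which only helps). The hypothesis $\mu_X(x) \ge 3$ is used precisely to guarantee $k \ge \mu_X(x) - 1 \ge 2$ in the backward direction, so that two copies of $x$ are always available to swap when $S$ uses more copies of $x$ than $Y$ can supply.
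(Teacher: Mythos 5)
Your proof is correct and follows the same approach as the paper's: both directions of $X^* = Y^*$ rest on the substitution $x + x \leftrightarrow 2x \bmod m$ applied to a sub-multiset of one of $X$, $Y$ to obtain a sub-multiset of the other with the same sum modulo $m$. You are more explicit about the multiplicity bookkeeping (the case split on $k$ and the check that the swapped multiset fits in $Y$), which the paper leaves implicit, but the underlying argument is identical.
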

\begin{proof}
  Clearly, any subset sum of $Y$ modulo $m$ is also a subset sum of $X$ modulo $m$. In the other direction, for any subset of $X$ containing at least two copies of $x$ we can replace two of these copies by one copy of $2x \bmod m$, thereby transforming it into a subset of $Y$ with the same sum modulo $m$. This proves $X^* = Y^*$. The cardinality $|Y| = |X| - 1$ is immediate.
\end{proof}

\begin{lemma} \label{lem:preprocessing}
  Given a multiset $X$ of $n$ integers in $\mathbb{Z}_m$, in time $\Oh(\min\{n \log n, n + m\})$ we can compute a multiset $Y$ over $\mathbb{Z}_m$ such that $Y^* = X^*$ and $|Y| \le \min\{n, 2|X^*|\}$. 
\end{lemma}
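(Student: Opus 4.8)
The plan is to repeatedly apply Lemma~\ref{lem:preprocessing_onestep} until no element has multiplicity at least $3$, and then argue that the resulting multiset is small. First I would describe the two regimes separately. In the $O(n+m)$ regime, I would compute the multiplicity array $\mu_X(\cdot)$ over $\mathbb{Z}_m$ by a single bucket pass over $X$ (time $O(n+m)$). Then, processing residues in a suitable order, whenever a residue $x$ has $\mu(x)\ge 3$ I repeatedly replace two copies of $x$ by one copy of $2x\bmod m$; each such step decreases $|X|$ by one (Lemma~\ref{lem:preprocessing_onestep}) and increments $\mu(2x\bmod m)$, so the total number of replacement steps is at most $n$, and maintaining the array costs $O(1)$ per step. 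A mild subtlety is the processing order: creating copies of $2x$ may re-inflate a residue we already handled, so I would process residues by, say, doubling chains or simply re-scan until stable; since each step strictly decreases $|X|$, at most $n$ steps occur overall regardless of order, giving total time $O(n+m)$.

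For the $O(n\log n)$ regime, I would instead store the multiset in a balanced BST or a hash map keyed by residue, so that each replacement step (lookup of $x$, decrement by $2$, lookup/insert of $2x\bmod m$, increment) costs $O(\log n)$; again there are at most $n$ steps since $|X|$ strictly decreases, for a total of $O(n\log n)$. Combining the two, the preprocessing runs in $O(\min\{n\log n,\ n+m\})$ time, and by Lemma~\ref{lem:preprocessing_onestep} the output multiset $Y$ satisfies $Y^*=X^*$.

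It remains to bound $|Y|$. The bound $|Y|\le n$ is immediate since every step only decreases the cardinality. For $|Y|\le 2|X^*|$, I would use that in $Y$ every residue has multiplicity at most $2$, so $|Y|=\sum_{x}\mu_Y(x)\le 2\cdot|\{x:\mu_Y(x)\ge 1\}|=2\cdot|\mathrm{supp}(Y)|$. The final step is to observe $\mathrm{supp}(Y)\subseteq Y^*$: every element of $Y$ is itself an attainable subset sum (take the singleton subset), hence $|\mathrm{supp}(Y)|\le |Y^*|=|X^*|$, giving $|Y|\le 2|X^*|$.

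The main obstacle I anticipate is not any single inequality but making the $O(n+m)$-time implementation clean: one must ensure that the doubling replacements terminate and are charged correctly even though a replacement can feed back into an already-processed residue. The clean way around this is the global potential argument — each replacement strictly decreases $|X|\ge 0$, so at most $n$ replacements happen in total no matter the order — which also makes the $O(n\log n)$ bound fall out for free. The correctness of $Y^*=X^*$ is then just an induction using Lemma~\ref{lem:preprocessing_onestep}.
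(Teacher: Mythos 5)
Your proposal follows essentially the same plan as the paper: exhaustively apply Lemma~\ref{lem:preprocessing_onestep}, implement the multiplicity table either as a length-$m$ array (the $\Oh(n+m)$ regime) or a balanced BST (the $\Oh(n\log n)$ regime), and bound $|Y|$ via $\mu_Y(x)\le 2$ together with $\mathrm{supp}(Y)\subseteq Y^*=X^*$. That last part is exactly the paper's argument, just spelled out.

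There is, however, a genuine (if small) gap in your running-time analysis for the $\Oh(n+m)$ regime. You correctly observe that at most $n$ \emph{replacement} steps occur regardless of processing order, because each decreases $|X|$. But this bounds only the work done \emph{during} replacements; it does not bound the time spent scanning residues whose multiplicity is already $\le 2$. Your fallback option --- ``simply re-scan until stable'' --- fails to meet the bound: each full scan of the array costs $\Theta(m)$, and because $2x\bmod m$ can be smaller than $x$ (wraparound) or the doubling orbit can cycle (e.g.\ $4\to 2\to 4$ in $\mathbb{Z}_6$), one can need $\Theta(\log n)$ scans before stability, giving $\Oh((n+m)\log n)$ overall. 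Your other option --- follow doubling chains --- is the right one and is precisely what the paper does via the recursive $\textproc{Preprocessing-Check}$: upon a replacement at $x$, immediately re-check both $x$ and $2x\bmod m$. The missing piece of the argument is then to bound the number of \emph{unsuccessful} checks: each of the $\Oh(n)$ successful checks spawns exactly two new checks, and there are $n$ initial checks, so the total number of checks (successful or not) is $\Oh(n)$, at $\Oh(1)$ each with the array. The same counting fixes your $\Oh(n\log n)$ analysis as well, since there too you only count replacement steps rather than all checks.
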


Note that in particular $|Y| \le \min\{n, 2m\}$.

\begin{algorithm}[ht]
\caption{Single Step of the Preprocessing}
\begin{algorithmic}[1]
\Function{Preprocessing-Check}{$x$}
	\If {$\mu_X(x) \ge 3$} 
	  \State $\mu_X(x) \mathrel{-}= 2$
	  \State $\mu_X(2x \bmod m) \mathrel{+}= 1$
	  \State $\textproc{Preprocessing-Check}(x)$
	  \State $\textproc{Preprocessing-Check}(2x \bmod m)$
    \EndIf
\EndFunction
\end{algorithmic}
\label{a:findnewsums}
\end{algorithm}
  
\begin{proof}
  We exhaustively apply Lemma~\ref{lem:preprocessing_onestep} by calling $\textproc{Preprocessing-Check}(x)$ on each $x \in X$. 
  After these calls have ended, Lemma~\ref{lem:preprocessing_onestep} is no longer applicable, and thus the resulting multiset $Y$ satisfies $\mu_Y(x) \le 2$ for all $x \in \mathbb{Z}_m$. By Lemma~\ref{lem:preprocessing_onestep} we have $Y^* = X^*$ and thus the support of $Y$ is a subset of $X^*$, which implies $|Y| \le 2 |X^*|$. The inequality $|Y| \le n$ is trivial, since the cardinality only decreases throughout this algorithm.
  
  Since each successful check decreases the cardinality, there are $\Oh(n)$ successful checks. Since each successful check calls two additional checks, there are $\Oh(n)$ unsuccessful checks. It follows that the procedure $\textproc{Preprocessing-Check}$ is called $\Oh(n)$ times in total. 
  
  It remains to argue in which running time one call of $\textproc{Preprocessing-Check}$ can be implemented. An easy solution is to store an array $M$ of length $m$ such that $M[x] = \mu_X(x)$. Initializing $M$ takes time $m$. One call of $\textproc{Preprocessing-Check}$ can then easily be implemented in time $\Oh(1)$.  This yields total time $\Oh(n+m)$.
  
  In order to avoid time (or space) $\Oh(m)$, we can alternatively store all distinct elements of~$X$ in a balanced binary search tree $\cal T$, and store the number $\mu_X(x)$ at the node corresponding to $x$ in $\cal T$. Building this tree initially takes time $\Oh(n \log n)$, for sorting the set $X$. One call of $\textproc{Preprocessing-Check}$ can then be implemented in time $\Oh(\log n)$, resulting in a total running time of $\Oh(n \log n)$.  
\end{proof}

We hence assume $n \le 2 |X^\ast| \le 2m$ in the remainder of this paper.

\section{Algorithm I: Rolling Hash and Polynomial Identity Testing}
\label{sec:algoone}

To describe our implementation, we consider Bellman's iteration\footnote{Here and in the remainder of this paper, we write $S+x := \{s+x \bmod m \mid s \in S\}$, for a given modulus~$m$.} $S^{i} = S^{i-1} \cup (S^{i-1} + x_{i})$. 
Our goal is to compute $X^\ast = S^{n}$, that is, the set of all attainable modular subset sums.
Note that, if we could efficiently compute the new sums $C^i \triangleq (S^{i-1} + x_{i}) \setminus S^{i-1}$, we would be able to implement the Bellman interation as $S^i = S^{i-1} \cup C^i$. We will shortly show that $C^i$ can be computed in output-sensitive time $\Oh((|C^i|+1)\log^2 m)$. This implies that the total time to evaluate $S^n = C^1 \cup \ldots \cup C^n$ is $\Oh((|C^1|+1)\log^2 m) + \ldots + \Oh((|C^n|+1)\log^2 m) \leq \Oh((|X^\ast|+n) \log^2 m) \leq \Oh(|X^\ast| \log^2 m) \leq \Oh(m \log^2 m)$ since the sets $C^i$ are disjoint and their union is of size $|X^\ast| \le m$.

We now argue how to compute these new subset sums, $(S^{i-1} + x_{i}) \setminus S^{i-1}$, efficiently. Instead of considering the set difference between the sets $S^{i-1} + x_{i}$ and $S^{i-1}$, we will consider their symmetric difference $(S^{i-1} + x_{i}) \triangle S^{i-1} = C^i \cup D^i$, where $D^i = S^{i-1} \setminus (S^{i-1} + x_{i})$. An important observation made in \cite{AxiotisBT18} is that since the sets $S^{i-1}$ and $S^{i-1} + x_{i}$ have the same size, the symmetric difference will have size exactly $2 |C^i|$ as $|C^i| = |D^i|$. Thus, recovering this larger set $C^i \cup D^i$ in output sensitive time is asymptotically the same as recovering only the elements of $C^i$. For notational convenience, we call elements of $D^i$ ``ghost sums'' in the sense that they are not new subset sums.

We now provide a recursive function (\cref{a:findnewsums}) that given a set of integers $S$ and integers $a,b,x \in \{0,\ldots,m\}$ computes $((S+x)\setminus S)\cap [a,b)$. Calling the function with $S=S^{i-1}$, $a=0$, $b=m$ and $x=x_i$, we can recover $C^i$. We will show that the function outputs $C^i$ in time $\Oh((|C^i|+1)\log^2 m)$, which is what we need.

\begin{algorithm}[ht]
\caption{Find new subset sums in range $[a,b)$}
\begin{algorithmic}[1]
\Function{Find-New-Sums}{$a, b, x, S$}
	\If {$(S+x) \cap [a,b) = S \cap [a,b)$} \Return $\emptyset$
    \EndIf
    \If {$b = a+1$} 
    	\If {$a \in (S + x) \setminus S$}
        	\Return $\{a\}$  \Comment{$a$ is a new subset sum}
        \Else
        	\ \Return $\emptyset$  \Comment{$a \in S \setminus (S + x)$ is a ghost sum}
        \EndIf
	\Else 
	    \ \Return $\textproc{Find-New-Sums}(a, \lfloor\frac{a+b}{2}\rfloor, x, S) \cup \textproc{Find-New-Sums}(\lfloor\frac{a+b}{2}\rfloor, b, x, S)$
    \EndIf
\EndFunction
\end{algorithmic}
\label{a:findnewsums}
\end{algorithm}

We implement the function efficiently by maintaining a data structure for the characteristic vector of the set $S$ that allows efficient membership queries, updates and equality checks between different parts of the vector as required in line 2.

We interpret the set $S$ as a characteristic vector and write $S_i = 1$ if $i \in S$ and $S_i=0$ if $i \not\in S$. We also extend this notation to $i < 0$ or $i \ge m$ by setting $S_i = S_{i \bmod m}$. To check that $(S+x) \cap [a,b) = S \cap [a,b)$, we need to check that the binary sequences $(S+x)_a, \ldots, (S+x)_{b-1}$ and $S_a, \ldots, S_{b-1}$ are equal. To check the equality of the two sequences we will use polynomial identity testing. In particular, let $r$ be a uniformly random integer from $\{0, \ldots, p-1\}$ for a large enough prime $p$ (which we will choose later). Then, with high probability, it is sufficient to check that $\sum_{i=a}^{b-1} (S+x)_i r^i = \sum_{i=a}^{b-1} S_i r^i\ (\text{mod}\ p)$ to conclude the equality of the sequences. The latter condition is equivalent to $r^x \sum_{i=a-x}^{b-x-1} S_i r^i \equiv \sum_{i=a}^{b-1} S_i r^i\ (\text{mod}\ p)$, which we can rearrange to $\sum_{i=m+a-x}^{m+b-x-1} S_i r^i \equiv r^{m-x} \sum_{i=a}^{b-1} S_i r^i\ (\text{mod}\ p)$.
This is the same as $f(m+b-x)-f(m+a-x) \equiv r^{m-x} (f(b)-f(a))\ (\text{mod}\ p)$, where $f(t)\triangleq \sum_{i=0}^{t-1} S_i r^i \bmod p$ for all $t=0, \ldots, 2m$.

\paragraph*{Correctness} To argue the correctness, we observe that for any two binary sequences $x,y \in \{0,1\}^t$, prime $p$ and a random integer $r \in \{0, \ldots, p-1\}$ we have $\Pr[\sum_i x_i r^i = \sum_i y_i r^i\ (\text{mod}\ p)]\leq t/p$ if $x \neq y$ and $\Pr[\sum_i x_i r^i = \sum_i y_i r^i\ (\text{mod}\ p)] = 1$ if $x=y$. This is also known as the Rabin-Karp rolling hash function. Choosing $p = \Theta(m^2 \log (m) / \delta)$, suffices to have the algorithm fail with probability at most $\delta$. 
This is because a single randomized comparison fails with probability at most $\frac {\delta} {m \log m}$ and by a union bound the probability that any of the $m \log m$ comparisons performed in the algorithm fails is at most $\delta$. Assuming basic arithmetic operations between $\Oh(\log m)$-bit numbers take constant time, we can choose $\delta = 1/\text{poly}(m)$ to obtain a high probability of success. 

\paragraph*{Running Time} We will show that the prefix sums $f(t)=\sum_{i=0}^{t-1} S_i r^i \ (\text{mod}\ p)$ can be evaluated in time $\Oh(\log n)$, which will lead to the required running time for computing $C^i$ as we will see later. Additionally, we need that the data structure can update the characteristic vector of the set~$S$ in $\Oh(\log n)$ time (to be able to implement the Bellman iteration $S^i = S^{i-1}\cup C^i$ efficiently). These requirements can be abstracted as follows. We have a sequence of integers $T_0, \ldots, T_{m-1}$ and in each step we either want to compute the prefix sum $g(t) \triangleq \sum_{i=0}^{t-1} T_i$ for some integer $t \in \{0, \ldots, 2m\}$ or we want to update an arbitrary integer $T_i$ for some $i \in \{0, \ldots, 2m\}$. Our goal is to implement the queries and updates in $\Oh(\log m)$ time. This indeed can be done by a simple binary tree.\footnote{The bounds are known to be tight up to a $\log \log m$ factor in the cell-probe model~\cite{patrascu2006logarithmic}. In particular, if the query (update) time is $\log^{\Oh(1)}m$, then the update (query) time is $\Omega(\log(m)/\log\log m)$.} Such a data structure implies that we can check the condition on line 2 in $\Oh(\log m)$ time. To bound the final running time, consider a particular position where $S^{i-1}$ and $S^{i-1}+x_i$ differ. This position can cause the condition $(S+x) \cap [a,b) = S \cap [a,b)$ to fail (and the algorithm to proceed to line 3) at most $\Oh(\log m)$ times. Each time we spend $\Oh(\log m)$ time to check the condition and the total number of positions where $S^{i-1}$ and $S^{i-1}+x_i$ differ is $2|C^i|$. In total, this implies that the function outputs $C^i$ in time $\Oh(\log m) \cdot \Oh(\log m) \cdot 2|C^i| = \Oh(|C^i|\log^2 m)$, assuming that $|C^i| > 0$. If $C^i = \emptyset$ the running time is $\Oh(\log m)$. Finally, we observe that we can perform the Bellman iteration $S^i = S^{i-1}\cup C^i$ in $\Oh(|C^i|\log m)$ time.

Combining the above, we arrive at our first result. 

\begin{theorem}\label{thm:result_1}
	Modular Subset Sum can be solved in $\Oh(m \log^2 m)$ time with high probability.
\end{theorem}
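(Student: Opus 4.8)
The plan is to instantiate Bellman's iteration exactly as set up above, using the components already assembled, and then to carry out the two bookkeeping arguments — running time and failure probability — carefully. First I would invoke Lemma~\ref{lem:preprocessing} to replace $X$ by an equivalent multiset with $n = \Oh(|X^*|) = \Oh(m)$ elements in time $\Oh(\min\{n\log n, n+m\}) = \Oh(m)$, which is negligible. Then fix a prime $p = \Theta(m^2 \log m / \delta)$ and a uniformly random $r \in \{0,\dots,p-1\}$, and maintain the current set $S = S^{i-1}$ together with the balanced binary tree for the prefix sums $f(t) = \sum_{i=0}^{t-1} S_i r^i \bmod p$, which supports point updates of the indicator vector and evaluation of any $f(t)$, $t \in \{0,\dots,2m\}$, in $\Oh(\log m)$ time. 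Starting from $S^0 = \{0\}$, in iteration $i$ I would call $\textproc{Find-New-Sums}(0, m, x_i, S^{i-1})$; the identity derived above shows that the test on line~2, $(S+x)\cap[a,b) = S\cap[a,b)$, reduces to checking $f(m+b-x) - f(m+a-x) \equiv r^{m-x}(f(b)-f(a)) \pmod p$, i.e.\ a constant number of tree queries plus one modular exponentiation, all in $\Oh(\log m)$ time. Having recovered $C^i$, I update the tree with its at most $|C^i|$ new elements, obtaining $S^i = S^{i-1}\cup C^i$ in $\Oh(|C^i|\log m)$ time, and after iteration $n$ output $S^n = X^*$.

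For the running time I would argue node by node through the recursion tree of one call to $\textproc{Find-New-Sums}$. This tree has depth $\Oh(\log m)$, and a node $[a,b)$ recurses on its two halves only if its equality test failed, which forces $[a,b)$ to contain a position where $S^{i-1}$ and $S^{i-1}+x_i$ disagree. For a fixed such position the nodes of the recursion tree containing it form a single root-to-leaf chain, so there are only $\Oh(\log m)$ of them; since there are exactly $2|C^i|$ disagreeing positions (here $|D^i| = |C^i|$ because $|S^{i-1}+x_i| = |S^{i-1}|$), the number of recursing nodes, hence the total number of visited nodes, is $\Oh((|C^i|+1)\log m)$. Each visited node costs $\Oh(\log m)$ for its test, so one call runs in $\Oh((|C^i|+1)\log^2 m)$ time; adding the $\Oh(|C^i|\log m)$ update cost and summing over $i$ gives $\sum_{i=1}^n \Oh((|C^i|+1)\log^2 m) = \Oh\big((|X^*| + n)\log^2 m\big) = \Oh(m\log^2 m)$, using that the $C^i$ are pairwise disjoint with $\{0\}\cup\bigcup_i C^i = X^*$, so $\sum_i |C^i| \le |X^*| \le m$, and $n = \Oh(m)$.

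For correctness the only possible error is a false positive on line~2: two distinct binary strings of length $b-a \le m$ hashing to the same value, which for uniformly random $r$ happens with probability at most $m/p$ by the standard Rabin--Karp/Schwartz--Zippel bound, while equal strings always pass. Over the whole execution there are $\Oh((|X^*|+n)\log m) = \Oh(m\log m)$ such tests, so a union bound bounds the total failure probability by $\Oh(m\log m)\cdot m/p = \Oh(\delta)$ for our choice of $p$; taking $\delta = 1/\mathrm{poly}(m)$ makes this $1/\mathrm{poly}(m)$ while keeping $p = \mathrm{poly}(m)$, so all arithmetic stays on $\Oh(\log m)$-bit numbers as assumed. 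I expect the main thing to get right is the node-counting argument in the running-time bound — specifically that a disagreeing position triggers only $\Oh(\log m)$ failed tests rather than a quantity depending on $|S^{i-1}|$ — together with verifying that the modular treatment of negative or out-of-range indices in $f$ (the extension $S_i := S_{i\bmod m}$ and the range $t\in\{0,\dots,2m\}$) is handled consistently, so that the rewritten congruence is exactly equivalent to the set equality it replaces.
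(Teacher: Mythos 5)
Your proposal is correct and follows essentially the same approach as the paper: Bellman's iteration implemented via the recursive \textproc{Find-New-Sums} routine, with line-2 equality tests reduced to the prefix-sum congruence $f(m+b-x)-f(m+a-x)\equiv r^{m-x}(f(b)-f(a))\pmod p$ maintained in a balanced (binary indexed) tree, the same $2|C^i|$-disagreeing-positions node-count argument bounding the recursion cost by $\Oh((|C^i|+1)\log^2 m)$, and the same Rabin--Karp union bound with $p=\Theta(m^2\log m/\delta)$. The only cosmetic difference is that you spell out the root-to-leaf-chain counting a bit more explicitly than the paper, but the underlying argument is identical.
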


A sample implementation in Python is given in Appendix~\ref{app:implementation}. It uses a simple and efficient implementation of binary trees for maintaining prefix sums~\cite{fenwick1994new}.

\section{Algorithm II: Dynamic Strings}
\label{sec:algotwo}

This section is devoted to the second algorithm for Modular Subset Sum. In particular, we prove the following theorem. Recall that we can assume $n = \Oh(|X^\ast|)$ (after an $\Oh(n \log n)$-time preprocessing).

\begin{theorem}\label{thm:result_2}
Modular Subset Sum can be solved by a deterministic algorithm in time $\Oh(|X^\ast| \,\polylog\, |X^\ast|)$, where $X^*$ denotes the set of attainable subset sums of $X$ modulo $m$.
\end{theorem}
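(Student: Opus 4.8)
The plan is to follow the route outlined in the paragraph ``Techniques for the Second Algorithm'': run Bellman's iteration $S^i = S^{i-1}\cup(S^{i-1}+x_i)$ starting from $S^0=\{0\}$, and reduce everything to computing, for each $i$, the set of genuinely new sums $C^i=(S^{i-1}+x_i)\setminus S^{i-1}$ in time $\Oh((|C^i|+1)\,\polylog\,|X^\ast|)$. Since the $C^i$ are disjoint with $\bigcup_i C^i=X^\ast$, and since $n=\Oh(|X^\ast|)$ after the preprocessing of \cref{lem:preprocessing}, summing over $i$ gives the claimed bound. As in Section~\ref{sec:algoone} we do not compute $C^i$ directly but the symmetric difference $(S^{i-1}+x_i)\triangle S^{i-1}=C^i\cup D^i$ with $D^i=S^{i-1}\setminus(S^{i-1}+x_i)$; by the observation of~\cite{AxiotisBT18} that $S^{i-1}$ and $S^{i-1}+x_i$ have equal size, this set has size exactly $2|C^i|$, so enumerating it and discarding the ghost sums $D^i$ is as good as enumerating $C^i$. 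We represent the characteristic vector of $S^{i-1}$ as a binary string $z$ of length $m$ and maintain it inside the deterministic dynamic-string data structure of Mehlhorn, Sundar and Uhrig~\cite{MehlhornSU97}, which supports creating single-symbol strings, concatenating, splitting, and testing equality of strings from a persistent family, each in time polylogarithmic in the lengths involved.

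Because $S^{i-1}+x_i$ is the cyclic shift of $z$ by $x_i$ positions, a handle for its characteristic string $z'$ is obtained from the handle for $z$ by one split and one concatenation. The elements of $C^i\cup D^i$ are exactly the positions at which $z$ and $z'$ differ, and we enumerate them in increasing order: given that all mismatches below a position $t$ have been reported, a longest-common-prefix query between the suffixes of $z$ and $z'$ starting at $t$ — implemented by binary search with $\Oh(\log)$ equality tests and splits — jumps directly to the next mismatch, so iterating produces all $2|C^i|$ mismatches with $\Oh(1)$ longest-common-prefix queries each. Each reported position is classified into $C^i$ (bit $1$ in $z'$, bit $0$ in $z$) or into the ghost set $D^i$, and once $C^i$ is known we turn $z$ into the characteristic string of $S^i=S^{i-1}\cup C^i$ by $|C^i|$ single-bit changes, each a constant number of splits and concatenations. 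This already yields a deterministic $\Oh(|X^\ast|\,\polylog\,m)$ algorithm, essentially the one suggested in~\cite{AxiotisBT18}.

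To replace $\polylog\,m$ by $\polylog\,|X^\ast|$ we never touch the length-$m$ strings explicitly but work throughout on their \emph{run-length encodings}. Since $|S^i|\le|X^\ast|$ for every $i$, the string $z$ has at most $2|X^\ast|+1$ maximal runs, and we store the sequence of runs — symbols of the form $(\text{bit},\text{length})$ — inside the data structure of~\cite{MehlhornSU97}, so the relevant length parameter becomes $\Oh(|X^\ast|)$; in parallel we keep a weight-balanced search tree over the run lengths, split and joined in lockstep, to translate between run indices and bit positions in $\Oh(\log|X^\ast|)$ time. Cyclically shifting by $x_i$ splits at most one run at each of the two cut points, so $z'$ still has $\Oh(|X^\ast|)$ runs and is produced in $\polylog\,|X^\ast|$ time. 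A longest common prefix of the two bit strings need not end at a run boundary, but (using that both strings have total length $m$) it equals the total bit-length of the first $k$ shared runs, plus the smaller of the two lengths of the $(k+1)$-st runs when those share a bit, where $k$ is the longest common prefix of the two \emph{run sequences}; hence each bit-string LCP query reduces to one run-sequence LCP query ($\polylog\,|X^\ast|$) plus $\Oh(1)$ arithmetic and one search-tree lookup. Each single-bit update merges or splits $\Oh(1)$ runs, keeping the run count $\Oh(|X^\ast|)$ and the auxiliary tree in sync. As the total number of mismatches over all $i$ is $\sum_i 2|C^i|=\Oh(|X^\ast|)$ and locating and outputting each costs $\polylog\,|X^\ast|$, the theorem follows.

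The main obstacle is precisely this run-length bookkeeping. One has to argue carefully that bit-string longest-common-prefix queries between $z$ and a cyclic shift of $z$ really do reduce to a constant number of run-sequence queries despite the run boundaries of $z$ and $z'$ being misaligned, that isolating and emitting a maximal block of mismatch positions costs time proportional to its size rather than to the number of runs it spans, that all single-bit updates and boundary splits keep the MSU structure and the auxiliary weight-balanced tree consistent, and — for the polylogarithmic guarantees of~\cite{MehlhornSU97} to hold with length parameter $\Oh(|X^\ast|)$ rather than $m$ — that only polynomially many strings, each with $\Oh(|X^\ast|)$ symbols, are ever created across the $n$ iterations. Everything else is a faithful reimplementation of Algorithm~I with the rolling hash replaced by exact equality tests on a deterministic string data structure.
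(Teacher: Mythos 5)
Your proposal is correct and follows essentially the same route as the paper: Bellman's iteration, enumerating the symmetric difference via longest-common-prefix queries on the Mehlhorn--Sundar--Uhrig dynamic-string structure, and then passing to a run-length encoding (with an auxiliary rank/select search tree to translate between bit positions and run indices) to replace $\polylog\, m$ by $\polylog\, |X^\ast|$. The only variations are cosmetic — the paper keeps a length-$2m$ double copy of the indicator vector so it never materializes the cyclic shift (and hence needs only a single BST on $S$ with rank/select rather than a tree ``split and joined in lockstep''), and it compresses only the $0$-runs, inserting a $(0,0)$ separator between consecutive $1$'s — but these do not change the substance of the argument.
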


We first set up the necessary notation on strings.  A string $z$ of length $|z|$ is a sequence of letters from alphabet $\Sigma$ referred to as $z[0],\ldots,z[|z|-1]$. By $z[i..j]$ we denote the substring from letter $z[i]$ up to letter $z[j]$. We write $z[..j]$ as shorthand for $z[0..j]$ and similarly $z[i..]$ for $z[i..|z|-1]$.

\subsection{Data Structure for Dynamic Strings}

We start by reviewing a classic tool in string algorithms. 
This is a data structure for efficiently maintaining a family $\cal F$ of strings over alphabet $\Sigma$ under the following update operations. 
\begin{itemize}
\item \textsc{AddString}$(c)$: Given a letter $c \in \Sigma$, this operation adds the 1-letter string $c$ to $\cal F$.
\item \textsc{Concatenate}$(s,s')$: Given strings $s,s' \in \cal F$, concatenate them and add the resulting string to $\cal F$. The two strings $s,s'$ remain in $\cal F$.
\item \textsc{Split}$(s,i)$: Given a string $s \in \cal F$ and a number $i$, split $s$ into two strings $s[..i-1]$ and $s[i..]$ and add these strings to $\cal F$. The string $s$ remains in $\cal F$.
\item \textsc{Equal}$(s,s')$: Given strings $s,s' \in \cal F$, return true if $s = s'$.
\end{itemize}

Note that no string is ever removed from $\cal F$.
Mehlhorn et al.~\cite{MehlhornSU97} were the first to design a data structure supporting these operations in polylogarithmic time. Their time bounds have been further improved~\cite{AlstrupBR98,AlstrupBR00,GawrychowskiKKL18}, but since we will ignore logarithmic factors we shall not make use of those improvements.

\begin{theorem}[\cite{MehlhornSU97}] \label{thm:mehlhorn}
  There is a deterministic data structure for maintaining a family of strings under the operations \textsc{AddString}, \textsc{Concatenate}, \textsc{Split}, and \textsc{Equal} such that any sequence of $k$ operations resulting in total size $N = \sum_{s \in \cal F} |s|$ runs in time $\Oh(k \,\polylog (kN))$.
\end{theorem}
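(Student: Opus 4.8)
The plan is to reconstruct a \emph{signature-tree} data structure in the style of Mehlhorn, Sundar and Uhrig~\cite{MehlhornSU97}. Its backbone is a hierarchical, \emph{locally consistent} parsing: for a string $s$ of length $\ell$ we build an $\Oh(\log \ell)$-level laminar decomposition, where each level is obtained from the previous one by two rounds. In the first round (``shrink'') we replace every maximal run $a^k$ with $k\ge 2$ by a single new symbol encoding the pair $(a,k)$. In the second round (``block'') we cut the resulting string into consecutive blocks of length $2$ or $3$ by a deterministic $\Oh(\log^*\ell)$-round symmetry-breaking procedure, so that the new symbol assigned to a block depends on only $\Oh(\log^*\ell)$ consecutive symbols around it, and replace each block by a single new symbol; the shrink round is applied first precisely so that the block round receives a string with no two equal adjacent symbols, which the symmetry-breaking step requires. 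One application of the two rounds shrinks the length by at least a factor $2$, so after $\Oh(\log \ell)$ levels the string collapses to one symbol $\mathrm{sig}(s)$. We store this decomposition as a tree $T_s$ of depth $\Oh(\log\ell)$ whose leaves are the letters of $s$, whose internal nodes carry the run/block symbols (a run node storing only $(a,k)$ compactly), and whose root carries $\mathrm{sig}(s)$; nodes with equal labels represent identical substrings, and distinct strings share subtrees, so the representation is automatically persistent.

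First I would establish \emph{consistency of naming}: a given block — a short tuple of child symbols, plus a multiplicity in the shrink case — must always receive the same fresh symbol. This is enforced by a single global \emph{name dictionary}, a balanced search tree keyed by these tuples, so that forming a block triggers a lookup that either reuses the stored name or mints a new one; since the number of distinct names ever created is bounded by the total work done, each lookup costs $\Oh(\log(kN))$. Combined with the determinism of shrink and block, this gives that two substrings that are equal as strings get the same signature. Hence \textsc{Equal}$(s,s')$ is answered in $\Oh(1)$ by comparing $\mathrm{sig}(s)$ with $\mathrm{sig}(s')$, and \textsc{AddString}$(c)$ just creates a one-leaf tree.

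The heart of the proof is the \emph{local consistency lemma}: if two strings share a common contiguous block $v$, their signature trees agree on every node strictly inside $v$ except for an $\Oh(\log^*\ell)$-long ``fringe'' at each end of $v$ on every level. This holds because both shrink and block determine each new symbol from only $\Oh(\log^*\ell)$ symbols of context, so a boundary disagreement propagates at most $\Oh(\log^*\ell)$ positions per level and cannot spread further. Granting the lemma, \textsc{Concatenate}$(s,s')$ descends the right spine of $T_s$ and the left spine of $T_{s'}$ in parallel and, at each of the $\Oh(\log\ell)$ levels, re-runs shrink and block on the $\Oh(\log^*\ell)$ blocks around the junction, looks up the resulting $\Oh(\log^*\ell)$ new symbols in the dictionary, and rebuilds the spine upward while reusing every subtree outside the fringe. \textsc{Split}$(s,i)$ is symmetric: cut $T_s$ at leaf $i$ into two partial trees and repair the $\Oh(\log^*\ell)$-sized fringe at the newly exposed end on each level. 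Thus one operation touches $\Oh(\log\ell\cdot\log^*\ell)$ tree nodes and performs that many dictionary operations of cost $\Oh(\log(kN))$ each; since $\ell\le N$, this is $\Oh(\polylog(kN))$ per operation and $\Oh(k\,\polylog(kN))$ over all $k$ operations, matching the claim. (Navigating to a position, needed for \textsc{Split}, costs $\Oh(1)$ per level using the stored block sizes and run multiplicities, hence $\Oh(\log\ell)$ in total.)

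The step I expect to be the main obstacle is making the local consistency lemma fully rigorous, in particular controlling the interaction of the shrink round with the block round: one must verify that a long run $a^k$ is parsed identically regardless of what precedes or follows it, that the $\Oh(\log^*\ell)$ fringe bound survives alternating the two round types over $\Oh(\log\ell)$ levels without the fringes compounding into something superpolylogarithmic, and that the re-parsing performed during \textsc{Concatenate} and \textsc{Split} genuinely only needs to inspect those fringes. A secondary subtlety is bounding the tree depth uniformly: because the shrink round may fail to reduce the length when there are no runs, one argues that the block round alone already halves the length, so the depth stays $\Oh(\log\ell)$ even though $\ell$ can grow up to $N$ — and it is this $\ell\le N$, rather than the instantaneous length, that appears in the stated bound $\Oh(k\,\polylog(kN))$.
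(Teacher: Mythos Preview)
The paper does not prove this theorem at all: it is stated as a citation of Mehlhorn, Sundar and Uhrig~\cite{MehlhornSU97} and used as a black box, with no argument given in the present paper. There is therefore nothing in the paper to compare your proposal against.

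That said, your sketch is a reasonable reconstruction of the original signature-tree construction of~\cite{MehlhornSU97}: the alternating shrink/block hierarchy, the deterministic symmetry-breaking to get locally determined block boundaries, the global name dictionary guaranteeing consistent signatures, and the $\Oh(\log^\ast \ell)$-fringe local consistency lemma are exactly the ingredients of that paper, and your cost accounting ($\Oh(\log \ell \cdot \log^\ast \ell)$ nodes touched per \textsc{Concatenate}/\textsc{Split}, each incurring one dictionary lookup) is correct. The concerns you flag about fringe propagation across levels and about depth bounds are the right ones, and they are the places where~\cite{MehlhornSU97} spends its effort; for the purposes of the present paper, however, none of this needs to be reproduced, since the result is simply invoked.
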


The data structure even works for very large alphabet $\Sigma$, as long as $\Sigma$ is ordered and we can compare any two letters in time $\Oh(1)$.

We observe that as an application of the above we obtain the following data structure.
\begin{lemma} \label{lem:ds}
  There is a deterministic data structure that maintains a length-$m$ string $z$ over alphabet $\{0,1\}$, initialized as $z = 0^m$, under the following operations, where any sequence of $k$ operations runs in time $\Oh(k \,\polylog(km))$:
  \begin{itemize}
    \item \textsc{Add}$(i)$: Given $0 \le i < m$, set $z[i] := 1$,
    \item \textsc{LCP}$(i,j)$: Return the length of the longest common prefix of $z[i..]$ and $z[j..]$. %
  \end{itemize}
\end{lemma}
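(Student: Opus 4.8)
The plan is to realize the length-$m$ bit string $z$ as an element of the string family $\cal F$ maintained by the Mehlhorn et al.\ data structure of Theorem~\ref{thm:mehlhorn}, and to implement both operations using only a constant number of \textsc{Split}, \textsc{Concatenate}, \textsc{AddString}, and \textsc{Equal} calls, together with a single binary-search over \textsc{Equal} for the \textsc{LCP} query. To initialize, I would build the string $0^m$ by repeated doubling: starting from the $1$-letter string $0$ (via \textsc{AddString}$(0)$), concatenate a string with itself $\Oh(\log m)$ times to obtain $0^{2^j}$ for all relevant $j$, and then concatenate a subset of these to assemble exactly $0^m$; this costs $\Oh(\log m)$ operations. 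Throughout, I maintain a handle to the current string $z \in \cal F$.

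For \textsc{Add}$(i)$, the goal is to flip position $i$ from $0$ to $1$. I would call \textsc{Split}$(z,i)$ to get $z[..i-1]$ and $z[i..]$, then \textsc{Split}$(z[i..],1)$ to peel off the single letter $z[i]$ and the tail $z[i+1..]$; I discard the peeled letter, and if it was $0$ (which it is, under the intended use, but we need not assume this) I use \textsc{AddString}$(1)$ to create a fresh letter $1$ and then reconstruct $z$ by two \textsc{Concatenate} calls: $z[..i-1]\cdot 1$ and then $(z[..i-1]\cdot 1)\cdot z[i+1..]$. This is $\Oh(1)$ operations per \textsc{Add}. Note that old versions of $z$ remain in $\cal F$ forever, so after $k$ operations the total size $N$ is $\Oh(km)$, and Theorem~\ref{thm:mehlhorn} gives amortized $\Oh(\polylog(kN)) = \Oh(\polylog(km))$ time per operation, as claimed.

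For \textsc{LCP}$(i,j)$, I would first extract the suffixes $z[i..]$ and $z[j..]$ from the current $z$ using two \textsc{Split} operations each (or one \textsc{Split} if we only need a suffix). To find the longest common prefix length $\ell$, binary-search on $\ell$: a candidate value $\ell$ is feasible iff the length-$\ell$ prefixes of $z[i..]$ and $z[j..]$ are equal, which I test by \textsc{Split}-ing each suffix at position $\ell$ to obtain the two length-$\ell$ prefixes and then calling \textsc{Equal} on them. The search range is $[0, m - \max(i,j)]$ (or we can just search $[0,m]$ and cap at the shorter suffix length), so this uses $\Oh(\log m)$ rounds, each with $\Oh(1)$ data-structure operations, hence $\Oh(\log m)$ operations total per \textsc{LCP} query, i.e.\ $\Oh(\polylog(km))$ amortized time.

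The only mild subtlety — and the step I would be most careful about — is the accounting: each \textsc{Split} and \textsc{Concatenate} creates new strings that are never deleted, so one must check that across a sequence of $k$ user-level operations the number of underlying data-structure operations is still $\Oh(k\,\polylog)$ and the cumulative size $N$ stays polynomially bounded in $k$ and $m$. Since each \textsc{Add} uses $\Oh(1)$ and each \textsc{LCP} uses $\Oh(\log m)$ underlying operations, a sequence of $k$ user operations induces $\Oh(k\log m)$ underlying operations; each such operation adds at most $\Oh(m)$ to the total size, so $N = \Oh(km)$ and Theorem~\ref{thm:mehlhorn} yields total time $\Oh(k\log m \cdot \polylog(km)) = \Oh(k\,\polylog(km))$, absorbing the extra $\log m$ into the $\polylog$. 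This completes the construction.
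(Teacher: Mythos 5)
Your proposal is correct and follows essentially the same approach as the paper: initialize $0^m$ by repeated doubling, implement \textsc{Add} with a constant number of \textsc{Split}/\textsc{AddString}/\textsc{Concatenate} calls, implement \textsc{LCP} by extracting the two suffixes and binary-searching for the longest equal prefix using \textsc{Split} and \textsc{Equal}, and then appeal to Theorem~\ref{thm:mehlhorn}. (A small arithmetic slip: with $\Oh(k\log m)$ underlying operations each adding $\Oh(m)$, you get $N = \Oh(km\log m)$ rather than $\Oh(km)$, but this is absorbed by the $\polylog$ and does not affect the stated bound.)
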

\begin{proof}
  For the initialization of $z = 0^m$, we first run \textsc{AddString}$(0)$ and then, using $\Oh(\log m)$ concatenations, we generate the strings $0^{2^i}$ and we combine them according to the binary representation of $m$ to obtain the string $0^m$.
  
  For \textsc{Add}$(i)$, we split $z$ at $i$ and at $i+1$ to obtain the strings $z[..i-1]$ and $z[i+1..]$. We then run \textsc{AddString}$(1)$, and finally we concatenate twice to obtain the resulting string $z' = \textsc{Concatenate}(\textsc{Concatenate}(z[..i-1], 1), z[i+1..])$.
  
  For a longest common prefix query \textsc{LCP}$(i,j)$, we first split $z$ at $i$ and at $j$ to obtain the strings $y_1 := z[i..]$ and $y_2 := z[j..]$. Then we perform a binary search for the largest $\ell$ such that $y_1[..\ell] = y_2[..\ell]$. Each step of the binary search uses two splits, to construct the strings $y_1[..\ell]$ and $y_2[..\ell]$, and one equality test.
  
  Hence, we can simulate $k$ operations among \textsc{Add} and \textsc{LCP} using $\Oh(k \log m)$ operations among \textsc{Equal}, \textsc{AddString}, \textsc{Concatenate}, and \textsc{Split}. The total string length of the constructed family $\cal F$ is $N = \Oh(m k \log m)$, and thus the total time is $\Oh(k \log m \,\polylog (kN)) = \Oh(k \,\polylog(km))$.
\end{proof}

\subsection{From Dynamic Strings to Modular Subset Sum}

Recall that given $X = \{x_1,\ldots,x_n\} \subseteq \mathbb{Z}_m$ our aim is to compute the set $X^\ast \subseteq \mathbb{Z}_m$ consisting of all subset sums of~$X$. 
As in our first algorithm for Modular Subset Sum, we follow Bellman's approach by initializing $S^0 := \{0\}$ and iteratively computing $S^i := (S^{i-1} + x_i) \cup S^{i-1}$ for $i=1,\ldots,n$. As we have seen in the first algorithm, it suffices to compute the symmetric difference $E^i := (S^{i-1} + x_i) \triangle S^{i-1}$ in time $\Oh((|E^i|+1) \,\polylog\, m)$; then over all iterations we compute $S^n = X^\ast$ in time $\Oh(|X^\ast| \,\polylog\, m)$.

It remains to show how to compute the symmetric difference $E^i$ in each iteration.
To this end, let $z$ be the indicator vector of $S^i$ copied twice, that is, $z$ is a string of length $2m$ over alphabet $\{0,1\}$ where $z[j]$ indicates whether $j \bmod m$ is in $S^i$, for any $0 \le j < 2m$. 
We maintain the data structure from Lemma~\ref{lem:ds} for the string $z$. Since this data structure initializes $z$ as $0^{2m}$, we call \textsc{Add}$(0)$ and \textsc{Add}$(m)$ to initialize $z$ correctly according to $S^0 = \{0\}$. 

At the beginning of the $i$-th iteration, note that $z[m-x_i..2m-x_i]$ is the indicator vector of $S^{i-1} + x_i$.
The query \textsc{LCP}$(0,m-x_i)$ yields a number $d'$ such that $d := d'+1$ is minimal with $z[d] \ne z[m-x_i+d]$. In other words, $d$ is the smallest element of the symmetric difference $E^i$ of $S^{i-1}$ and $S^{i-1} + x_i$ (unless $d \ge m$, in which case we have that $E^i$ is the empty set). We find the next element of $E^i$ by calling \textsc{LCP}$(d+1,m-x_i+d+1)$. Repeating this argument, we compute the set $E^i$ using $\Oh(|E^i|+1)$ \textsc{LCP} operations. This finishes the description of how to compute the symmetric difference $E^i$.
We maintain the string $z$ for the next iteration by setting $z[d] = z[d+m] = 1$ for each $d \in E^i$ with $d \not\in S^{i-1}$. This uses $\Oh(|E^i|)$ \textsc{Add} operations.

In total, we run $\Oh(|X^*|+n) = \Oh(|X^*|)$ \textsc{LCP} and \textsc{Add} operations on a string of length $2m$. This takes total time $\Oh(|X^*| \,\polylog\, (|X^*| \,m)) = \Oh(|X^*| \,\polylog\, m)$ according to Lemma~\ref{lem:ds}. In particular, this running time is bounded by $\tOh(m)$. We further improve this running time in Section~\ref{sec:improverunningtime} below. For pseudocode see Algorithm~\ref{alg:main}.

\begin{algorithm}[ht]
\caption{Algorithm for Modular Subset Sum using dynamic strings.}
\label{alg:main}
\begin{algorithmic}[1]
\Function{ModularSubsetSumViaDynamicStrings}{$X,m$}
\State $S := \{0\}$
\State Initialize $z = 0^{2m}$ (as in Lemma~\ref{lem:ds})
\State $z.\textsc{Add}(0)$
\State $z.\textsc{Add}(m)$
\For {$i=1,\ldots,n$}
  \State $E^i := \emptyset$
  \State $d := 1 + z.\textsc{LCP}(0, m-x_i)$
  \While {$d < m$}
    \State $E^i := E^i \cup \{d\}$
    \State $d := d + 1 + z.\textsc{LCP}(d+1, m-x_i+d+1)$
  \EndWhile
  \For {\textbf{each} $d \in E^i$} 
    \If {$d \not\in S$} 
      \State $S := S \cup \{d\}$
      \State $z.\textsc{Add}(d)$
      \State $z.\textsc{Add}(d+m)$
    \EndIf
  \EndFor
\EndFor
\State \Return $S$
\EndFunction
\end{algorithmic}
\end{algorithm}

\subsection{Solution Reconstruction}

In order to reconstruct a subset $Y \subseteq X$ summing to a given target $t$, we augment the above algorithm as follows. We store the set $S^i$ in a balanced binary search tree~${\cal T}^i$. For each number $d \in S^i \setminus S^{i-1}$, in the node corresponding to $d$ in ${\cal T}^i$ we store a pointer to the node corresponding to $d-x_i$. At the end of the algorithm ${\cal T}^n$ stores $S^n = X^*$, the set of all subset sums of $X$. Note that computing ${\cal T}^n$ augmented by these pointers takes total time $\Oh(|X^*| \log |X^*|)$ and thus does not increase the asymptotic running time of the algorithm.

With this bookkeeping, given any target integer $t \in \mathbb{Z}_m$ we first search for $t$ in ${\cal T}^n$ to check whether $t \in X^*$. If $t \in X^*$, then starting from the node corresponding to $t$ in ${\cal T}^n$, we follow the stored pointers to reconstruct a subset $Y \subseteq X$ summing to $t$ modulo $m$. The total running time of this solution reconstruction is $\Oh(|Y| + \log |X^*|)$. 

Clearly, we have $|Y| \le n$. This is essentially the only control we have over the size $|Y|$, in particular we do not guarantee~$Y$ to be a smallest subset summing to $t$.

\subsection{Improving the Running Time}
\label{sec:improverunningtime}

We now improve the running time from $\Oh(|X^*| \,\polylog\, m)$ to $\tOh(|X^*|) = \Oh(|X^*| \,\polylog\, |X^*|)$, finishing the proof of Theorem~\ref{thm:result_2}. Observe that all steps of the algorithm (including the solution reconstruction) run in time $\tOh(|X^*|)$, except for Lemma~\ref{lem:ds}. Hence, it suffices to replace this lemma by the following improved variant, which makes use of run-length encoding. 

\begin{lemma} \label{lem:dsimproved}
  There is a deterministic data structure that can initialize $z = 0^m$ and perform $k$ \textsc{Add} and \textsc{LCP} operations in total time $\Oh(k \,\polylog\, k)$.
\end{lemma}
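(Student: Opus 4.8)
The core idea is that throughout the algorithm the string $z$ (of length $m$, representing $S^i$, or length $2m$ if we keep the doubling trick) has very few \emph{runs}, i.e.\ maximal blocks of equal symbols. Indeed, $z$ starts as a single run $0^m$ (or two runs after the initial $\textsc{Add}$s), and each $\textsc{Add}(i)$ operation changes at most one symbol, hence increases the number of runs by at most $2$. After $k$ operations the number of runs is $\Oh(k)$. So I would maintain $z$ not as a length-$m$ string but via its \emph{run-length encoding} (RLE): the sequence of pairs $(\text{symbol}, \text{run length})$. This is a string of length $\Oh(k)$ over the (ordered, $\Oh(1)$-comparable) alphabet $\{0,1\} \times \{0,\dots,m\}$, and we feed \emph{this} string to the Mehlhorn et al.\ data structure (Theorem~\ref{thm:mehlhorn}). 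Since its total size is now $N = \Oh(k \cdot \polylog)$ rather than $\Oh(m k)$, the per-operation cost becomes $\Oh(\polylog(kN)) = \Oh(\polylog k)$, giving the claimed total $\Oh(k\,\polylog k)$.

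\textbf{Implementing the operations on the RLE.} For $\textsc{Add}(i)$: locate the run containing position $i$ by a prefix-length search over the RLE (each RLE entry carries its run length; I can binary-search using $\Oh(\log k)$ $\textsc{Split}$/length queries, or maintain the run-length prefix sums in an auxiliary balanced tree in $\Oh(\log k)$ time). Splitting that run into (at most) three runs — the part before $i$, the singleton $1$ at $i$, and the part after — and re-concatenating uses $\Oh(1)$ $\textsc{Split}$, $\textsc{AddString}$, $\textsc{Concatenate}$ operations on the RLE string; I also merge with neighbouring runs if symbols coincide, again $\Oh(1)$ ops. For $\textsc{LCP}(i,j)$: I need the longest common prefix of $z[i..]$ and $z[j..]$. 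First split the RLE at the runs containing $i$ and $j$ to get the RLEs of the two suffixes (with possibly truncated first runs); then binary-search for the largest number of \emph{RLE entries} that agree, using the $\textsc{Equal}$ primitive on RLE prefixes, in $\Oh(\log k)$ steps. Once the agreeing RLE-prefix is maximal, the two suffixes either (a) disagree in the symbol of the next run, in which case the LCP is exactly the summed length of the agreeing runs, or (b) agree in the next symbol but differ in run length, in which case the LCP is that summed length plus $\min$ of the two next run lengths. Either way the answer is computed from $\Oh(1)$ additional length lookups. So each $\textsc{Add}$ and each $\textsc{LCP}$ costs $\Oh(\log k)$ calls to the Theorem~\ref{thm:mehlhorn} data structure plus $\Oh(\log k)$ auxiliary-tree work, i.e.\ $\Oh(\polylog k)$ amortized.

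\textbf{Bounding the total size.} The subtle accounting point — and the step I expect to need the most care — is verifying that the \emph{total} length $N$ of the string family $\cal F$ ever created inside the Mehlhorn et al.\ structure stays $\Oh(k\,\polylog k)$, so that $\polylog(kN) = \polylog k$. The live RLE of $z$ has length $\Oh(k)$ at all times, but each of the $\Oh(k)$ $\textsc{Add}/\textsc{LCP}$ operations spawns $\Oh(\log k)$ auxiliary $\textsc{Split}$/$\textsc{Concatenate}$ calls during its internal binary search, and each such call creates strings that are prefixes/suffixes of the current RLE, hence of length $\Oh(k)$. That already gives $N = \Oh(k^2 \log k)$ naively, which is only $\polylog(kN) = \Oh(\polylog k)$ — still fine, since $\log(k^2\log k) = \Oh(\log k)$. (One can shave this further, but it is unnecessary: any polynomial bound $N = k^{\Oh(1)}$ yields $\polylog(kN) = \polylog k$.) Substituting into Theorem~\ref{thm:mehlhorn}, the $\Oh(k\log k)$ total primitive operations run in $\Oh(k \log k \cdot \polylog(k N)) = \Oh(k\,\polylog k)$ time.

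\textbf{Putting it together.} Replacing Lemma~\ref{lem:ds} by this RLE-based data structure inside Algorithm~\ref{alg:main} (now maintaining $z$ of length $2m$ whose RLE has $\Oh(k)$ runs throughout, $k = \Oh(|X^*|)$), every $\textsc{Add}$ and $\textsc{LCP}$ costs $\Oh(\polylog |X^*|)$ amortized, the solution-reconstruction bookkeeping is unaffected and already runs in $\tOh(|X^*|)$, and the overall running time drops to $\Oh(|X^*|\,\polylog |X^*|)$, completing the proof of Theorem~\ref{thm:result_2}.
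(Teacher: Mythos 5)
Your proposal is correct and follows essentially the same route as the paper: maintain a run-length encoding of $z$ in the Mehlhorn et al.\ structure, keep an auxiliary balanced search tree to translate between positions in $z$ and positions in the RLE, implement \textsc{Add} via $\Oh(1)$ splits/concatenations on the RLE, implement \textsc{LCP} by a binary search over RLE prefixes plus an $\Oh(1)$ case analysis on the first disagreeing run, and observe that the total family size $N = k^{\Oh(1)}$ keeps $\polylog(kN) = \polylog k$. The only cosmetic difference is that the paper encodes over alphabet $\{1\} \cup \{(0,L)\}$ (padding with $(0,0)$ between consecutive ones, which sidesteps run-merging) and uses a rank/select tree on the set of $1$-positions, whereas you use a generic $\{0,1\}\times\{0,\dots,m\}$ RLE with explicit merging; both are fine.
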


\begin{proof}
  Recall that we assume that arithmetic operations on $\Oh(\log m)$-bit numbers can be performed in time $\Oh(1)$. In particular, the string length $m$ can be processed in time $\Oh(1)$.

  Denote by $S \subseteq \mathbb{Z}_m$ the set of which $z$ is the indicator vector. That is, initially we have $S = \emptyset$ and on operation \textsc{Add}$(i)$ we update $S := S \cup \{i\}$. 
  We store $S$ in a balanced binary search tree~$\cal T$. 
  We also augment $\cal T$ to store at each node the size of its subtree. This allows us to perform the following queries in time $\Oh(\log |S|)$: 
  \begin{itemize}
  \item \emph{Rank:} Given a number $v$, determine the number of keys stored in $\cal T$ that are smaller than $v$, 
  \item \emph{Select:} Given a number $v$, determine the $v$-th number stored in $\cal T$ (in sorted order).
  \end{itemize}
  Note that $\cal T$ can be updated in time $\Oh(\log |S|)$ per operation. The total time for maintaining $\cal T$ during $k$ \textsc{Add} and \textsc{LCP} operations is $\Oh(k \log k)$, since $|S| \le k$.
  
  We compress the string $z$ by replacing each run of 0's by one symbol. Specifically, let $\Sigma := \{1\} \cup \{(0,L) \mid 0 \le L \le m\}$. Note that symbols in $\Sigma$ can be read and compared in time $\Oh(1)$. We convert string $z \in \{0,1\}^m$ to a string $C(z) \in \Sigma^*$ by replacing each maximal substring $0^L$ of $z$ by the symbol $(0,L)$. For simplicity, we also add the symbol $(0,0)$ between any two consecutive 1's in~$z$. For example, the string $z = 10001100$ is converted to $C(z) = 1 (0,3) 1 (0,0) 1 (0,2)$.
  We use the data structure of Theorem~\ref{thm:mehlhorn} to store $C(z)$. We maintain $C(z)$ using the binary search tree~$\cal T$, by implementing initialization, \textsc{Add}, and \textsc{LCP} as follows.
  
  \smallskip
  \emph{Initialization.} Given $m$, we initialize $z = 0^m$ and thus $C(z) = (0,m)$. This string is generated by calling \textsc{AddString}$(c)$ for $c = (0,m) \in \Sigma$, which takes time $\Oh(1)$.
  
  \smallskip
  \emph{Add.} Given $i$, we want to set $z[i] := 1$. 
  We denote by $a < i < b$ the predecessor and successor of $i$ in $S$, so that $z[a..b] = 1 0^{b-a-1} 1$. Note that $a$ and $b$ can be computed from $\mathcal{T}$. Using a rank query on $i$, we can infer the corresponding position $h$ with $C(z)[h-1..h+1] = 1 (0,b-a-1) 1$. We split $C(z)$ at $h$ and at $h+1$ to obtain the strings $C(z)[..h-1]$ and $C(z)[h+1..]$. We then construct the string $(0,i-a-1) 1 (0,b-i-1)$ using \textsc{AddString} thrice and \textsc{Concatenate} twice. Finally, we concatenate $C(z)[..h-1]$ and $(0,i-a-1) 1 (0,b-i-1)$ and $C(z)[h+1..]$ to form the new string $C(z)$ after setting $z[i] := 1$. 
  
  \smallskip
  \emph{LCP.} Given $i,j$, let $y_1 := z[i..]$ and $y_2 := z[j..]$. We first construct the strings $C(y_1)$ and $C(y_2)$. This is similar to the last paragraph: 
  Denote the predecessor and successor of $i$ by $a < i \le b$, so that $z[a..b] = 1 0^{b-a-1} 1$. Using a rank query, we find the corresponding position $h$ with $C(z)[h-1..h+1] = 1 (0,b-a-1) 1$. Splitting $C(z)$ at $h+1$ and concatenating it after $(0,b-i-1)$ yields $C(y_1)$. (If $b-i-1 = 0$ then we remove the initial $(0,0) = (0,b-i-1)$.) We similarly generate $C(y_2)$. 
  We now perform a binary search for the largest $\ell$ such that $C(y_1)[..\ell] = C(y_2)[..\ell]$, using two \textsc{Split} and one \textsc{Equal} operation per binary search step. We use a rank and a select query to determine the length $\Delta$ of the string corresponding to $C(y_1)[..\ell]$, that is, $C(z[i..i+\Delta-1]) = C(y_1)[..\ell]$. If $C(y_1)[\ell+1] = 1$ or $C(y_2)[\ell+1] = 1$ or one of these symbols is undefined (i.e., out of bounds) then \textsc{LCP}$(i,j) = \Delta+1$. Otherwise, we have $C(y_1)[\ell+1] = (0,L_1)$ and $C(y_2)[\ell+1] = (0,L_2)$, and then \textsc{LCP}$(i,j) = \Delta + \min\{L_1,L_2\} + 1$.
  
  \smallskip
  Hence, we can simulate $k$ operations among \textsc{Add} and \textsc{LCP} using $\Oh(k \log k)$ operations among \textsc{Equal}, \textsc{AddString}, \textsc{Concatenate}, and \textsc{Split}. The total string length of the constructed family~$\cal F$ is $N = \Oh(k^2 \log k)$, since after $k$ operations each constructed string has at most $k$ 1's and thus has length $\Oh(k)$. 
  By Theorem~\ref{thm:mehlhorn}, the total time is $\Oh(k \,\polylog (kN)) = \Oh(k \,\polylog\, k)$.
\end{proof}

\bibliography{main}

\appendix

\section{Simple and Fast Algorithm for All-Pairs Non-Decreasing Paths}
\label{app:apnp}

In the APNP problem, given an edge-weighted graph, the goal is to compute for any pair of nodes $a$ and $b$ the minimum cost of a path from $a$ to $b$ that uses non-decreasing edge-weights. The cost of such a path is defined to be the largest edge-weight encountered on the path.

There has been a number of works that sequentially improved the running time for the directed and undirected case of APNP~\cite{vassilevska2008nondecreasing,duan2018improved,DuanJW19}. The directed case is a generalization of the max-min matrix product~\cite{vassilevska2007all,duan2009fast} and the best known algorithm for both problems runs in time $\tOh(n^{(\omega+3)/2})$, where $\omega$ is the exponent of fast matrix multiplication~\cite{coppersmith1990matrix,williams2012multiplying,le2014powers}. In contrast, the undirected case is known to be solvable in $\tOh(n^2)$ time~\cite{DuanJW19}. %

We show how to solve the undirected APNP problem by a simple algorithm in time $\Oh(n^2 \log n)$. This improves the previously best result in terms of log-factors, and it is optimal up to a single log-factor. For simplicity, in the following we call the undirected case of APNP simply APNP.
In this section we prove the following theorem.

\begin{theorem}
	All-Pairs Non-Decreasing Paths can be solved in $\Oh(n^2 \log n)$ time w.h.p.
\end{theorem}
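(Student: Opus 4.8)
The plan is to reduce undirected APNP to a sequence of set-union operations and to implement those unions with the rolling-hash machinery of Section~\ref{sec:algoone}, using a segment tree in place of a prefix-sum (Fenwick) tree to shave off a logarithmic factor. First I would sort the $\Oh(n^2)$ edges by weight and process them in non-decreasing order, maintaining for every vertex $v$ a set $P(v) \subseteq V$ of ``origins'': after processing a prefix of the edge list, $P(v)$ should contain exactly those $s$ for which there is a non-decreasing $s$--$v$ path through processed edges, and for each such $s$ we remember the weight $w$ at which $s$ first entered $P(v)$ --- this $w$ is precisely the APNP value from $s$ to $v$. Initially $P(v) = \{v\}$. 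The key observation, exactly as in the Modular Subset Sum algorithm, is that when an edge $\{u,v\}$ of weight $w$ is processed, any non-decreasing path ending at $u$ or at $v$ can be extended across $\{u,v\}$ (its last weight is $\le w$), so the correct update is $P(u), P(v) \leftarrow P(u) \cup P(v)$, recording the answer $w$ for every newly inserted origin. Since all edges of a common weight must be usable consecutively, I would process each weight class by connected components: for a component $K$ of the subgraph of edges of that weight I perform the pairwise merge along the edges of a spanning tree of $K$ in two passes (one toward the root, one away from it); two passes suffice to give every vertex of $K$ the union $\bigcup_{v \in K} P(v)$.

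Next comes the amortization. A pairwise merge of $P(u)$ and $P(v)$ inserts $|P(u)\setminus P(v)|$ new origins into $P(v)$ and $|P(v)\setminus P(u)|$ into $P(u)$, so its output size $d = |P(u)\triangle P(v)|$ equals the increase of $|P(u)|+|P(v)|$ caused by that merge. Summing over all merges ever performed, these increases telescope to $\sum_v (|P(v)_{\mathrm{final}}|-1) = \Oh(n^2)$. The total number of merges is $\Oh(\sum_K |K|) = \Oh(m) = \Oh(n^2)$ (spanning-tree edges of the components, over all weights, are edge-disjoint), so the merges whose symmetric difference is empty --- detectable by a single comparison of one stored hash per vertex --- cost only $\Oh(n^2)$ in total. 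Hence it suffices to implement a pairwise merge in time $\Oh((d+1)\log n)$, which then yields $\Oh(n^2 \log n)$ overall.

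For that I maintain, for each vertex $v$, a segment tree over the universe $[n]$ whose nodes store a Karp--Rabin polynomial hash of the indicator vector of $P(v)$ restricted to the node's dyadic interval. An insertion touches $\Oh(\log n)$ nodes, and the hash of $P(v)$ over any dyadic interval is read off in $\Oh(1)$. To extract $P(u)\triangle P(v)$ I run the recursion \textproc{Find-New-Sums} of Section~\ref{sec:algoone}, but \emph{aligned to the segment tree}: at a node, compare the two stored hashes; if they agree, prune; if the node is a leaf, output it (a membership query tells which of $P(u),P(v)$ it belongs to); otherwise recurse into both children. This visits only $\Oh((d+1)\log n)$ nodes, each in $\Oh(1)$ time. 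The extra logarithm present in Section~\ref{sec:algoone} disappears here because the two sets are compared over the \emph{same} range --- there is no cyclic shift by $x_i$ --- so every queried interval can be taken dyadic and answered in $\Oh(1)$, instead of needing an $\Oh(\log n)$-time prefix-sum query. Choosing the hash modulus $p = \text{poly}(n)$ large enough and union-bounding over the $\Oh(n^2 \log n)$ comparisons gives correctness with high probability. Adding up sorting $\Oh(n^2\log n)$, insertions $\Oh(n^2\log n)$, empty merges $\Oh(n^2)$, and nonempty merges $\Oh(n^2\log n)$ gives total time $\Oh(n^2\log n)$.

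The step I expect to be the main obstacle is the correct and efficient treatment of equal-weight edges: a single sorted pass doing one pairwise merge per edge is \emph{not} correct, because the arbitrary order in which the equal-weight edges happen to be visited would artificially restrict which non-decreasing paths are discovered, so one genuinely has to iterate within each weight class. The delicate point is to show that processing each connected component with the two spanning-tree passes simultaneously (i) restores correctness --- every origin reachable using edges of weight at most $w$ does end up in $P(v)$ --- and (ii) keeps the number of merges $\Oh(m)$, so that the telescoping argument still bounds the total output size by $\Oh(n^2)$. Everything else (maintaining the segment trees, the rolling-hash comparisons, and, if one also wants the actual witness paths, a light bookkeeping of parent pointers as in the Solution Reconstruction of Section~\ref{sec:algotwo}) is routine.
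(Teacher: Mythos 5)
Your proposal matches the paper's approach in all of its core components: you process edges in order of increasing weight, maintain for each vertex the set of origins that can reach it, extract the symmetric difference of two such sets in output\mbox{-}sensitive time via a segment tree storing Karp--Rabin hashes over dyadic intervals, and amortize via the $\Oh(n^2)$ bound on the total number of (origin, destination) pairs ever discovered. You also correctly identify that, because there is no cyclic shift here, the recursion can be aligned with the tree so each interval hash is read in $\Oh(1)$ rather than $\Oh(\log n)$ --- exactly what the paper's implementation does. The one genuine point of divergence is the treatment of equal-weight edges. The paper black-boxes this by invoking a known reduction (Lemma~23 of Duan et al.) that replaces each equal-weight class by a gadget with distinct weights and at most twice as many edges; afterwards a single symmetric-difference merge per edge is provably correct. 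You instead keep the original graph and process each weight class by two passes of pairwise merges along a rooted spanning tree of each same-weight component, so that every vertex of a component $K$ ends up with $\bigcup_{v\in K}P(v)$. This is correct --- the maximal-weight suffix of any non-decreasing path lies inside a single such component, and a bottom-up followed by a top-down pass does propagate the full union to every vertex --- and the number of merges stays $\Oh(m)$ since the spanning-tree edges over all weight classes are edge-disjoint, so the telescoping argument still bounds total output by $\Oh(n^2)$. The trade-off is that your route is self-contained but needs the extra (easy) argument you flag, whereas the paper's citation makes the per-edge processing a one-liner.
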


Let $G=(V,E)$ be an undirected graph with $n = |V|$ nodes and $m = \Oh(n^2)$ edges having edge weights $w(e)$ for $e \in E$. A path is a sequence of edges $e_1, e_2, \ldots, e_\ell$, such that $e_i,e_{i+1}$ share an endpoint for all $1 \le i\le \ell - 1$.
 A non-decreasing path is a path satisfying $w(e_i)\leq w(e_{i+1})$ for all $1\leq i \leq \ell-1$. The weight of this non-decreasing path is defined to be $w(e_\ell)$, the weight of the last edge. The All Pairs Non-Decreasing Paths Problem (APNP) asks to determine the minimum weight non-decreasing path between every pair of vertices.

For simplicity, we focus on the strictly increasing version of the problem where there are no edges of equal weight. The general case can be converted to the distinct weights case (see Lemma 23 in Duan et al.~\cite{duan2019faster}), through a simple reduction. The reduction looks for connected components formed by edges of the same weight and replaces these edges with new ones with distinct weights. This preprocessing step runs in $\Oh(n^2)$ time and the number of edges in the new graph at most doubles. It thus suffices to focus on the distinct weights case.

The algorithm starts by ordering all edges of the graph from the smallest weight to the largest and inspecting the edges in this order. For every vertex $u$ of the graph we maintain a set of vertices $v$ that can be reached from $u$ by a non-decreasing path using only the edges that have been inspected so far. Initially the sets for all vertices are empty. The first time a vertex $v$ is added to a list corresponding to a vertex $u$ determines the cost of minimum non-decreasing path from $u$ to $v$. In particular, if the vertex $v$ is added to the list corresponding to the vertex $u$ in the phase when we are inspecting edge $e$, the weight of the minimum non-decreasing path from $u$ to $v$ is equal to the weight $w(e)$ of the edge $e$. Let $C_e$ be the set of newly discovered reachability pairs added in the phase when inspecting the edge $e$. We will shortly describe how we can compute $C_e$ in an output-sensitive time $\Oh(|C_e|\log n + 1)$. This implies that the total running time is upper bounded by $\Oh(n^2 \log n)$ since the total number of node pairs is upper bounded by $n^2$ and each pair is discovered at most once.

Now we describe how to compute $C_e$ in $\Oh(|C_e|\log n + 1)$ time. Let $e = (a,b)$. Let $u$ be a vertex that can reach $a$ but cannot reach $b$ only using the edges inspected so far (not including $e$) via a non-decreasing path. We observe that, by adding the edge $e=(a,b)$, the vertex $u$ can now reach vertex $b$ (by first going to $a$ and then traversing the edge $e$). Similarly, if $u$ can reach $b$ but cannot reach $a$, after adding $e$, $u$ can reach $a$. On the other hand, if $u$ can reach both $a$ and $b$ (or cannot reach both), no new edges will be added from $u$ after inspecting edge $e$. Let $R^a$ be the set of vertices $u$ that can reach $a$ but cannot reach $b$ (right before inspecting $e$), and $R^b$ be the set of vertices that can reach $b$ but not $a$. We conclude that $C_e = ((R^a \setminus R^b) \times \{b\}) \cup ((R^b \setminus R^a) \times \{a\})$. Therefore it is sufficient to be able to compute $R^a \setminus R^b$ and $R^b \setminus R^a$ in $\Oh(|C_e|\log n + 1)$ time. If we spend $\Oh(\log n)$ time per single vertex from one of these two sets, we obtain the required running time. We use a similar idea as we used for Modular Subset Sum. Let $R^a_i=1$ if the $i$-th vertex of the graph belongs to $R^a$ and $R^a_i=0$ otherwise. For a random integer $r \in \{0, \ldots, p-1\}$ (for a large enough prime $p$) we build a tree data structure that stores partial sums of the sequence $R^a_0 \cdot r^0, R^a_1 \cdot r^1, R^a_2 \cdot r^2, \ldots$ in its internal leaves. In particular, we associate the leaves of a complete binary tree with the elements of the sequence and each node recursively stores the sum of values of its children. We can update an element of the sequence by spending $\Oh(\log n)$ time on the data structure. Furthermore, if we have data structures for $R^a$ and $R^b$, we can recursively inspect subtrees (whose hash values disagree) of the two data structures to find all elements from $R^a \setminus R^b$ and $R^b \setminus R^a$. The time spent to find one element is $\Oh(\log n)$. Thus, if we store such a data structure for each vertex of the graph, we can update them efficiently and compute $C_e$ in time $\Oh(|C_e|\log n + 1)$ for any edge $e$.

A sample implementation in Python is given in Appendix~\ref{app:apnpimplementation}.

\section{Python Implementation of Modular Subset Sum} \label{app:implementation}

Below we present a simple implementation of our first algorithm for Modular Subset Sum (Theorem~\ref{thm:result_1}) in Python.\footnote{The code can be also found at \url{https://ideone.com/YlLwMQ}.}
It maintains a binary indexed tree that keeps track of the prefix sums of polynomial hashes of the characteristic vector of the attainable subsets.
To easily deal with rollover due to the cyclicity of the mod operation, a separate copy of the characteristic vector is kept translated by $m$.

It takes as an input a list of numbers $W$ and the modulus $m$, and returns a list of length $m$, where the entry at position $s$ is None if $s$ is not a 
possible subset sum of $W$ modulo $m$, or contains the last number from $W$ that was added to create the subset sum $s$.

\begin{python}[mathescape]
import random

def ModularSubsetSum(W, m):
    p = 1234567891                            #large prime ${\color{commentcolour} p > m^2 \log m}$
    r = random.randint(0,p)                   #random number ${\color{commentcolour} r}$ in [0,p)
    powr = [1]                                #Precompute powers of ${\color{commentcolour} r }$
    for i in range(2*m):                      #powr[i] ${\color{commentcolour} \triangleq }$ ${\color{commentcolour} r^i }$ (mod p)
        powr.append((powr[-1] * r) 

    #Binary Indexed Tree for prefix sums
    tree = [0] * (2*m)
    def read(i):                              #Prefix sum of [0,i)
        if i<=0: return 0
        return tree[i-1] + read(i-(i&-i)) 
    def update(i, v):                         #add v to position i
        while i < len(tree):
            tree[i] += v
            i += (i+1)&-(i+1)

    #Functions for finding new subset sums and adding them
    def FindNewSums(a,b,w):
        h1 = (read(b)-read(a))*powr[m-w] 
        h2 = (read(b+m-w)-read(a+m-w)) 
        if h1 == h2: return []
        if b == a+1:                        
            if sums[a] is None: return [a]    #a is a new sum
            return []                         #a is a ghost sum
        return FindNewSums(a,(a+b)//2,w) + FindNewSums((a+b)//2,b,w)
    def AddNewSum(s, w):
        sums[s] = w
        update(s,powr[s]), update(s+m,powr[s+m])

    #Main routine for computing subset sums
    sums = [None] * m
    AddNewSum(0,0)
    for w in W:
        for s in FindNewSums(0,m,w):
            AddNewSum(s,w)
            
    return sums

\end{python}

\subsection*{Example} 
Find all modular subset sums mod 8 with numbers 1, 3 and 6:
\begin{python}
ModularSubsetSum([1,3,6], 8)  #Returns [0, 1, 6, 3, 3, None, 6, 6]
\end{python}

\subsection*{Recovering the subset} 
To recover the subset making a particular subset sum, we repeatedly subtract the last number added in the subset sum $s$ until we get down to 0.
\begin{python}
def RecoverSubset(sums, s):
    if sums[s] is None: return None
    if s <= 0: return []
    return RecoverSubset(sums, (s-sums[s]) 
\end{python}

\begin{python}
sums = ModularSubsetSum([1,3,6], 8)
RecoverSubset(sums, 7)  #Returns [1, 6]
RecoverSubset(sums, 2)  #Returns [1, 3, 6]
\end{python}

\section{Python Implementation of All-Pairs Non-Decreasing Paths}\label{app:apnpimplementation}

Below we present a simple implementation of our algorithm in Python for computing minimum weight non-decreasing path between all pairs of $n$ vertices.\footnote{The code can be also found at \url{https://ideone.com/S9RAhX}.} 
It takes as an input a list $E$ of edges of the graph in increasing order of their weights and the number $n$ of vertices. Note that the actual weights of the edges do not matter besides their relative order. The algorithm returns an $n \times n$ matrix \textsf{path}. \textsf{path}$[u,v]=$ None if there is no way to reach $v$ from $u$ by traversing edges with increasing weights. Otherwise \textsf{path}$[u,v]=$ \textsf{par}, where \textsf{par} is the previous vertex on the minimum weight non-decreasing path from $u$ to $v$. For every vertex of the graph the algorithm keeps track of partial hashes of vertices that can reach this vertex in a tree data structure.

\begin{python}[mathescape]
import random

def AllPairsNonDecreasingPaths(E, n):
    p = 1234567891                            #large prime ${\color{commentcolour} p > n^3 \log n}$
    r = random.randint(0,p)                   #random number ${\color{commentcolour} r}$ in [0,p)
    powr = [1]                                #Precompute powers of ${\color{commentcolour} r }$
    for i in range(n):                        #powr[i] ${\color{commentcolour} \triangleq }$ ${\color{commentcolour} r^i }$ (mod p)
        powr.append((powr[-1] * r) 
    
    N = 1<<(n-1).bit_length()                 #round n to next power of 2
    tree = [ [0]*(2*N) for _ in range(n) ]
    def update(v,node,val):
        while node > 0:
            tree[v][node] += val
            node >>= 1
      
    #Functions for finding new paths and adding them
    def FindNewPaths(a,b,node):
        if tree[a][node] == tree[b][node]: return []
        if node >= N:                         
            u = node - N                      #leaf node
            if path[u][a] is None: 
                return [(u,a,b)]
            return [(u,b,a)]
        return FindNewPaths(a,b,2*node) + FindNewPaths(a,b,2*node+1)
    def AddNewPath(u, v, par):
        path[u][v] = par
        update(v,u+N,powr[u])
    
    #Main routine for finding all pairs non-decreasing paths
    path = [ [None] * n for _ in range(n) ]
    for i in range(n):
        AddNewPath(i,i,i)
    for (a,b) in E:
        for (u,v,par) in FindNewPaths(a,b,1):
            AddNewPath(u,v,par)
            
    return path
\end{python}

\subsection*{Example} 
Find all pairs non-decreasing paths in a graph with $n$ nodes and edges $(1,2)$ and $(0,1)$:
\begin{python}
AllPairsNonDecreasingPaths([(1,2),(0,1)], 3)  
                #Returns [[0, 0, None], [1, 1, 1], [1, 2, 2]]
\end{python}

\subsection*{Recovering the path between two vertices} 
To recover a specific path between two vertices, we repeatedly move to the last node visited before the destination until we reach the source.
\begin{python}
def RecoverPath(path,u,v):
    if path[u][v] is None: return None
    if u == v: return [u]
    return RecoverPath(path, u, path[u][v]) + [ v ]
\end{python}
\begin{python}
path = AllPairsNonDecreasingPaths([(1,2),(0,1)], 3)
RecoverPath(path, 2, 0)  #Returns [2,1,0]
RecoverPath(path, 0, 2)  #Returns None
\end{python}

\end{document}